\newcommand{\hiset}{\mbox{$\bm{\mathcal{H}}$}\xspace}
\newcommand{\bset}{\mbox{$\bm{\mathcal{B}}$}\xspace}
\newcommand{\gset}{\mbox{$\bm{\mathcal{G}}$}\xspace}
\newcommand{\tabincell}[2]{\begin{tabular}{@{}#1@{}}#2\end{tabular}}
\providecommand{\tabularnewline}{\\}
\newtheorem{lm}{Lemma}
\newtheorem*{proof}{Proof}
\def\BibTeX{{\rm B\kern-.05em{\sc i\kern-.025em b}\kern-.08em
    T\kern-.1667em\lower.7ex\hbox{E}\kern-.125emX}}
\begin{document}

\title{Privacy-Aware Cost-Effective Scheduling Considering Non-Schedulable Appliances\\
in Smart Home}

\author{Boyang~Li,~\IEEEmembership{Student Member,~IEEE,}
        Jie~Wu,
        and~Yiyu~Shi,~\IEEEmembership{Senior Member,~IEEE}
\thanks{B. Li and Y. Shi are with the Department of Computer Science and Engineering, University of Notre Dame, Notre Dame, IN 46556 USA (e-mail: bli1@nd.edu; yshi4@nd.edu).}%
\thanks{J. Wu was with the Department of Computer Science and Engineering,  University of Notre Dame, IN 46556 USA (e-mail: jane.dream.wu@gmail.com)}

\thanks{This paper is supported in part by a Luksic Family Grant.
}
}

\maketitle
\begin{abstract}
A Smart home provides integrating and electronic information services  to help residential users manage their energy usage and bill cost, but also exposes users to significant privacy risks due to fine-grained information collected by smart meters. Taking account of users' privacy concerns, this paper focuses on cost-effective runtime scheduling designed for schedulable and non-schedulable appliances. To alleviate the influence of operation uncertainties introduced by non-schedulable appliances, we formulate the problem by minimizing the expected sum of electricity cost under the worst privacy situation. Inventing the iterative alternative algorithm, we effectively schedule the appliances and rechargeable battery in a cost-effective way while satisfying users' privacy requirement. Experimental evaluation based on real-world data demonstrates the effectiveness of the proposed algorithm.
\end{abstract}

\begin{IEEEkeywords}
Smart Homes, Non-schedulable scheduling.
\end{IEEEkeywords}

\section{Introduction}
With the development of Internet of Things (IoT) \cite{xu2018quantization}\cite{xu2018mda}\cite{xu2018scaling}\cite{xu2018accelerating}\cite{xu2017efficient}, smart home, managed by intelligent devices \cite{al2014novel} \cite{al2017demand}, have provided tangible benefits
for customers to control and lower their electricity costs. 
For instance, smart meters, which are used for energy
efficiency, are being aggressively deployed in homes and
businesses as a critical component in smart grids. Each customer can control the energy consumption by shifting the operation of
appliances from high price hours to low price hours in order to reduce electricity expense and the
peak-to-average ratio \cite{palensky2011demand} \cite{yue2011dual}. However, attackers can
manipulate smart devices, generate fake electricity price, and 
identify consumers' personal behavior patterns by monitoring the power consumption peaks
to cause physical, psychological, and
financial harm~\cite{molina2010private} \cite{yuan2011modeling} \cite{liu2016smart}. A straightforward
idea to handle such attacks is to add extra uncertainty in the individual load
information by perturbing the aggregate load
measurement \cite{sankar2013smart} \cite{liu2015impact}. However, this approach has to 
modify the metering infrastructure which might not be practical because
millions of smart meters have already been installed. In addition, applying uncertainty 
into customers' power consumption results in inaccurate billing cost. 
It is thus important to carefully consider customers' privacy in
energy scheduling design for smart homes. 

Recently, several studies have paid attention to customers' privacy in
power consumption scheduling design. 
Tan et al.~\cite{tan2013increasing} proposed a power
consumption scheduling strategy to balance the consumers' privacy and energy
efficiency by using an energy harvesting technique. Kalogridis et al.~\cite{kalogridis2010privacy} proposed a power
mixing algorithm against power load changes by introducing a rechargeable
battery. The goal of the proposed algorithm is to maintain the current load
equal to the previous load. McLaughlin et al.~\cite{mclaughlin2011protecting}
proposed a non-intrusive load leveling (NILL) algorithm to combat potential invasions
of privacy. The proposed NILL algorithm adopted an in-residence battery to
offset the power consumed by appliances, to level the load profile to a
constant target load, and to mask the appliance features. 
Chen et al.~\cite{chen2013residential} explored the trade-off between the
electricity payment and electric privacy protection using Monte Carlo
simulation. Yang et al.~\cite{yang2015cost} proposed a scheduling framework for
smart home appliances to minimize electricity cost and protect the privacy of
smart meter data using a rechargeable battery. 
Liu et al.~\cite{liu2016optimal} explored a stochastic gradient method 
to minimize the weighed sum of financial cost and the deviation from the load profile. 
All these existing works assume the activity of every appliance can be precisely scheduled and focus on scheduling algorithm design to trade off between the
electricity cost and the customers' privacy protection for schedulable
household appliances.

However, active operations of household appliances are not always schedulable all the time. They can be classified into
two groups in terms of controllability: schedulable appliances and non-schedulable appliances. 
The operation time of schedulable appliances can be postponed to later time
during the period under consideration, like a laundry and dish washer. 
These appliances can be scheduled and turned on/off by a scheduler. 
The non-schedulable appliances are those that their usages are determined by
the user and are not negotiable, like a TV, laptop and oven. These appliances must
be turned on immediately upon the users' request and cannot be scheduled ahead
of time. Non-schedulable appliances introduce operation uncertainties. 
None of the existing work have considered the influence of  
the uncertainties of non-schedulable appliances on customers' privacy and the corresponding
customers' comfort. 
Our previous work Wu et al. \cite{wu2016privacy} \cite{wu2015efficient}
formulated the scheduling problem by minimizing the expected sum of electricity
cost and achieving acceptable privacy protection.

To further consider, we provide a runtime scheduling framework to comprehensively consider the impact of non-schedulable appliances on customers' privacy and their comfort. 
To our best knowledge, this is the first work addressing 
non-schedulable appliances comprehensively.
The proposed framework adopts a novel iterative algorithm for
efficient characterization of non-schedulable appliances' effects. It optimizes
the electricity costs by incorporating customers' privacy for both schedulable
and non-schedulable appliances as well as the rechargeable battery. The
proposed algorithm is evaluated using real-world household data. 
The results demonstrate that the design of non-schedulable module and runtime scheduling framework will 
propose an operation solution of schedulable appliances to prepare the worst privacy scenarios, 
minimize electricity cost, and provide privacy protection guarantee
when non-schedulable appliances operate in any situation.

\section{Background and Motivation}\label{sec:model}

This section presents an overviews of the smart home system model and introduces the impact of
non-schedulable appliances on customers' privacy and their comfort.

\subsection{Smart Home System Overview}
We use a well-known smart home system as discussed
in~\cite{wu2016privacy}. Both schedulable and non-schedulable appliances as
well as energy storage device, like rechargeable batteries, receive power from
utility provider via a smart meter device and managed by a power management
unit (PMU). In this paper, we adopt the same load models, rechargeable battery
model, customers' privacy model, and price model in~\cite{wu2016privacy}. 
The models are briefly summarized below.

\subsubsection{Load model for appliances}

\noindent\textbf{Load model for schedulable appliances:} 

Let $p_{i}$ be the average power consumption of $i$-th appliance ($i=1,2,\cdots,N$, 
where $N$ is the total number of schedulable appliances).
$r_{i}(t)$ and $r_{i}(t+1)$ be the remaining operation duration of $i$-th appliance at time slot
$t$ and $t+1$, respectively. 
We denote $y(t)$ as the total energy consumption of appliances at each
time slot $t$ ($t=1,2,\cdots,\tau$) and $\tau$ is a scheduling horizon denoted by y(t) can be obtained as
\begin{equation}
  \label{eqn:y1}
  \scalebox{0.8}{%
  $y(t)=\sum_{i=1}^{N}p_{i}\cdot(r_{i}(t)-r_{i}(t+1)) \forall i\in\{1,2,\cdots,N\}$},
\end{equation}
The value of $r_{i}(t)$ and $r_{i}(t+1)$ satisfies 0 $\leq$ $r_{i}(t)$ $\leq$ $r_{i}(t+1)$ $\leq$ $E_{i}/p_{i}$. $E_{i}$ is the known work load of $i$-th appliance.
$r_{i}(t+1)$ can be computed as 
\begin{equation}
  \label{eqn:r2}
  \scalebox{0.8}{%
  $r_{i}(t+1)=\max\left(r_{i}(t)-\sum_{j=1}^{t}x_{i}(j),0\right)$},
\end{equation}
where $r_{i}(1)=E_{i}/p_{i}$ is the initial $i$-th appliance state. 
\begin{equation}
  \label{eqn:constraint}
  \scalebox{0.8}{%
  $\sum_{t=1}^{\tau}x_{i}(t)=1, \forall i\in\{1,2,\cdots,N\}$,}
\end{equation}
where $x_{i}(t)$ is a binary variable. If $x_i(t) = 1$, the $i$-th schedulable
appliance start its operation at time slot $t$. Otherwise, $x_i(t) = 0$. 

\noindent\textbf{Load model for non-schedulable appliances:}
Since the duration of a non-schedulable appliance is unknown, we use a single parameter, $w(t)$ to represent the total power consumption of all operating
non-schedulable appliances at time slot $t$.

\subsubsection{Rechargeable Battery Model}
Let $B(t)$ be the battery state at time slot $t$, which is a function of
battery charge/discharge power ($z(t)$). The battery state at time slot $t+1$ can be expressed as 
\begin{equation}
  \label{eqn:B2}
    \scalebox{0.8}{%
  $B(t+1)=B(t)+z(t),
  t=1,2,\cdots.$}
\end{equation}
where $B(1)$ is the initial battery state. $B(t)$ must satisfy
\begin{equation}
  \label{eqn:B3}
  \scalebox{0.8}{%
  $0 \leq B(t) \leq B_{\max},
  t=1,2,\cdots,\tau .$}
\end{equation}
where $B_{\max}$ is the maximum battery capacity.  
From the perspective of the PMU, it schedules the action variable
$z(t)$ and decides how much power should be charged to or discharged from the battery.

\subsubsection{Customers' Privacy Model}
Let $l(t)$ denote the total aggregated load over the scheduling horizon
$\tau$  and $l(t)$ can be computed by 
\begin{eqnarray}
  \label{eqn:l}
\scalebox{0.8}
 l(t)=&&y(t)+z(t)+w(t)\nonumber\\
  =&&\sum_{i=1}^{N}p_{i}\min \left(\sum_{j=1}^{t}x_{i}(j), r_i(t)\right)+z(t)+w(t)\nonumber\\
  =&&\mathbf{p}^{T}\min \left(\mathbf{V}(t),\mathbf{R}(t)\right)+w(t),
\end{eqnarray}
where $\footnotesize\mathbf{V}(t)=\left[\sum_{j=1}^{t}x_{1}(j),\cdots,\sum_{j=1}^{t}x_{N}(j),z(t)\right]^{T}$ is a $(N+1)\times1$ vector; 
$\footnotesize\mathbf{p}=\left[p_{1},p_{2},\cdots,p_{N},1\right]^{T}$ is a $(N+1)\times1$ vector;
$\mathbf{R}(t)=\left[r_{1}(t), r_{2}(2),\cdots,r_{N}(t),+\infty,\right]^{T}$ is a $(N+1)\times1$ vector;
$[\cdot]^{T}$ is the transpose operation.

If the aggregated load profile is known, the customers' privacy information, such as in-house activatiate, can be obtained. 
To mask such privacy information, we use $\lambda$ to flatten aggregated load profile.
Using the concept of running average historical load ($\overline{l}$), which is
defined as $\overline{l}=\frac{1}{\tau}\sum_{t=1}^{\tau}l(t)$, the customers'
privacy requirement is described as
\begin{equation}
  \label{eqn:privacy}
      \scalebox{0.8}{%
  $-\lambda\leq l(t)-\overline{l}\leq\lambda, \vspace*{-1ex}$}
\end{equation}
where $\lambda\geq0$ is a bounding parameter used to guarantee the privacy.
The larger the value of $\lambda$, the more flexible it is for the customers' privacy
requirement.

\subsubsection{Price Model}
Let $c(t)$ be the per-unit electricity cost received from the utility provider at time slot $t$. 
The time-varying price value of $c(t)$ follows the price discussed in\cite{wu2016privacy}.

\subsection{Impacts of Non-Schedulable Appliances}
To analyze the impacts of non-schedulable appliances, we consider a simple
example with two schedulable appliances ($\alpha_1$ and $\alpha_2$), e.g. laundry and one non-schedulable appliance ($\beta$), e.g. a TV. Let scheduling
horizon $\tau=4$ and privacy bound $\lambda=40$. 
For $\alpha_1$ and $\alpha_2$, we set $E_{\alpha_1}=60$ kW,
$E_{\alpha_2}=80$ kW, $p_{\alpha_{1}}=40$ kW, and $p_{\alpha_{2}}=30$ kW. 
Fig.~\ref{fig:sample1} illustrates the scheduled operations of the two schedulable
appliances, household electricity usage, and privacy ($l(t)-\overline{l}$) without considering
non-schedulable appliances.
\begin{figure}[t]
  \begin{centering}
	\includegraphics[width=0.42\textwidth]{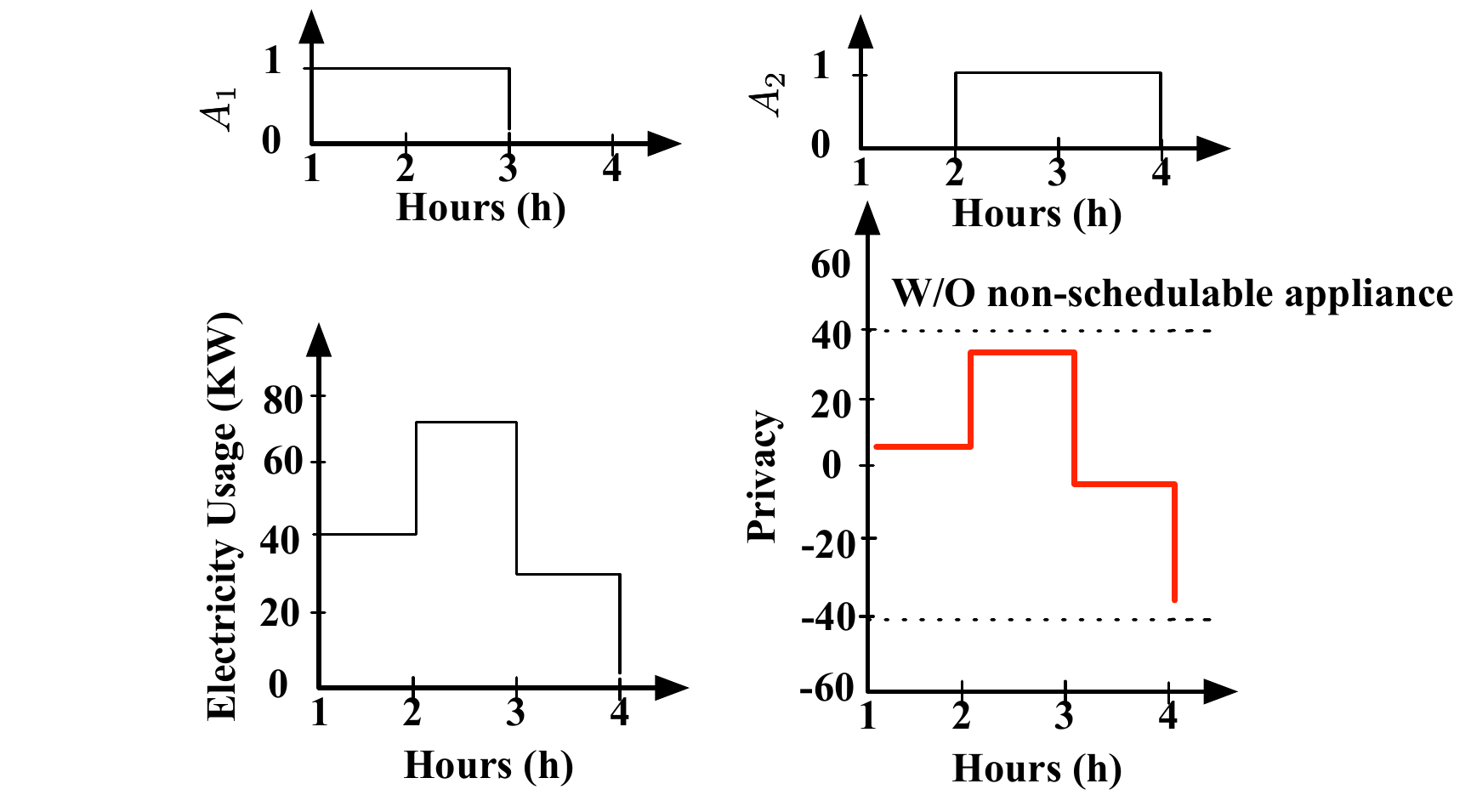}
	\par
  \end{centering}
  \vspace*{-3ex}
  \caption{The Operations of $\alpha_1$ and $\alpha_2$, household electricity usage, and privacy.}
  \label{fig:sample1} \vspace*{-3ex}
\end{figure}

In real-world, customers may turn on the non-schedulable appliance at anytime. 
We assume that the non-schedulable appliance
$\beta$ operates at time slot $t=2$ and $t=3$ with a high possibility based
on customers' historical behaviors. Fig.~\ref{fig:sample2} illustrates the
household electricity usage and privacy including the non-schedulable appliance
operation. 
\begin{figure}[t]
  \begin{centering}
	\includegraphics[width=0.42\textwidth]{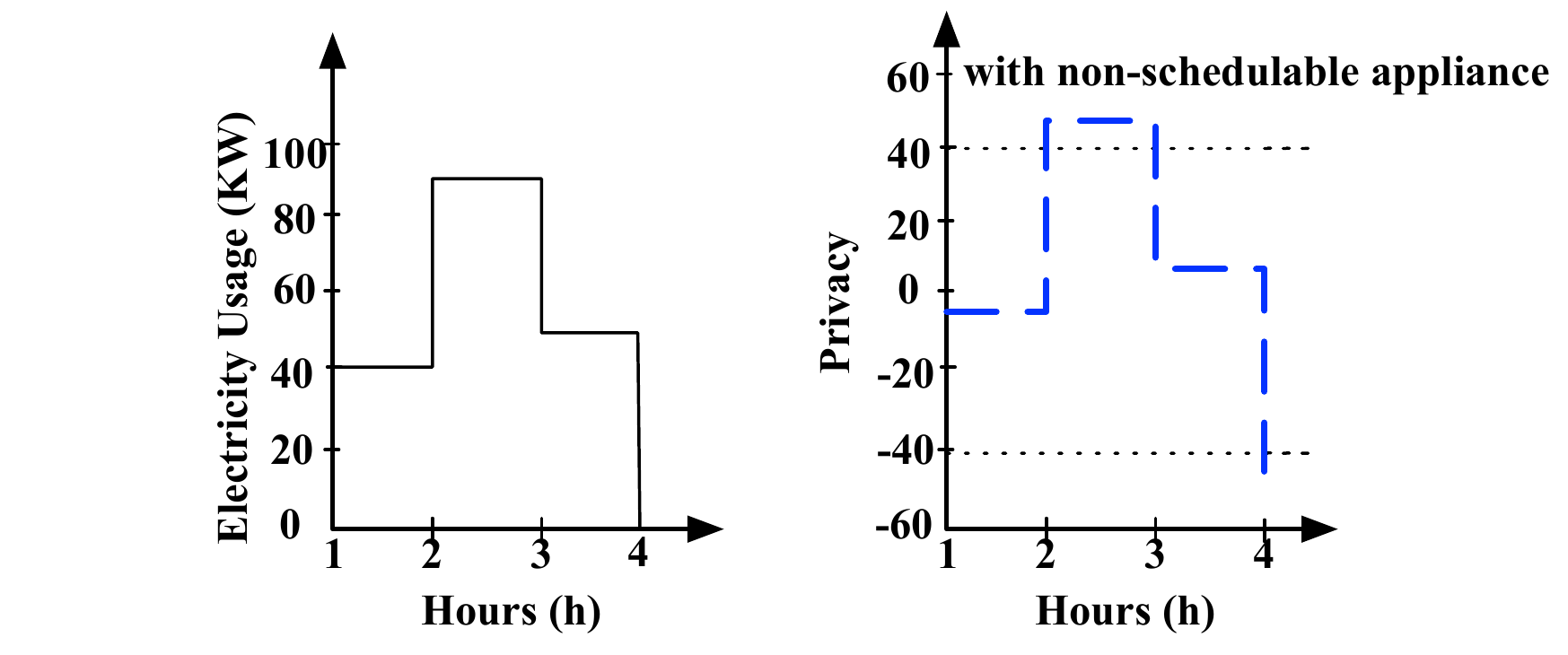}
	\par
  \end{centering}
  \vspace*{-3ex}
  \caption{The household electricity usage and privacy within a non-schedulable
  appliance.}
  \label{fig:sample2} \vspace*{-3ex}
\end{figure}
Comparing Fig.~\ref{fig:sample1} and Fig.~\ref{fig:sample2}, one can see that the privacy
breach occurs when a non-schedulable appliance operates in the scheduling horizon. Fig.~\ref{fig:sample1} and Fig.~\ref{fig:sample2} also
demonstrates the trade-off between customers' privacy and contentment requirements. 
In order to fit the privacy constrain, it is necessary to re-schedule the runtime of schedulable appliances and charging/discharging time of battery.
Therefore, to
ensure the customers' privacy protection, understanding of the
influence of non-schedulable appliance becomes essential. 

\section{Runtime Appliance Scheduler}\label{sec:design3}

In this section, we present
our approach to runtime cost-effective appliance scheduling to satisfy both privacy and contentment within considering
the effect of non-schedulable appliance. Specifically, Section~\ref{subsec:scheduling-overview} describes
the hierarchical structure of Runtime Appliance Scheduler, RAS. 
Section~\ref{subsec:schedulable-problem} introduces
the specific scheduling problem to be be solved. Section~\ref{subsec:algorithm} discusses
the algorithm design for solving this problem.

\subsection{RAS Overview}
\label{subsec:scheduling-overview}
The overall structure of RAS is shown in Fig.~\ref{fig:scheduling-overview}.
The input components of the RAS is the
schedule table and the current operation states of both schedulable and
non-schedulable appliances as well as the battery state. Given these two input
components, RAS can easily lookup the schedule table
and find the optimal scheduling solution in real time. Among these input components,
building the schedule table at the beginning of scheduling horizon plays the
key role in RAS design.  

\begin{figure}[t]
  \begin{centering}
	\includegraphics[width=0.42\textwidth]{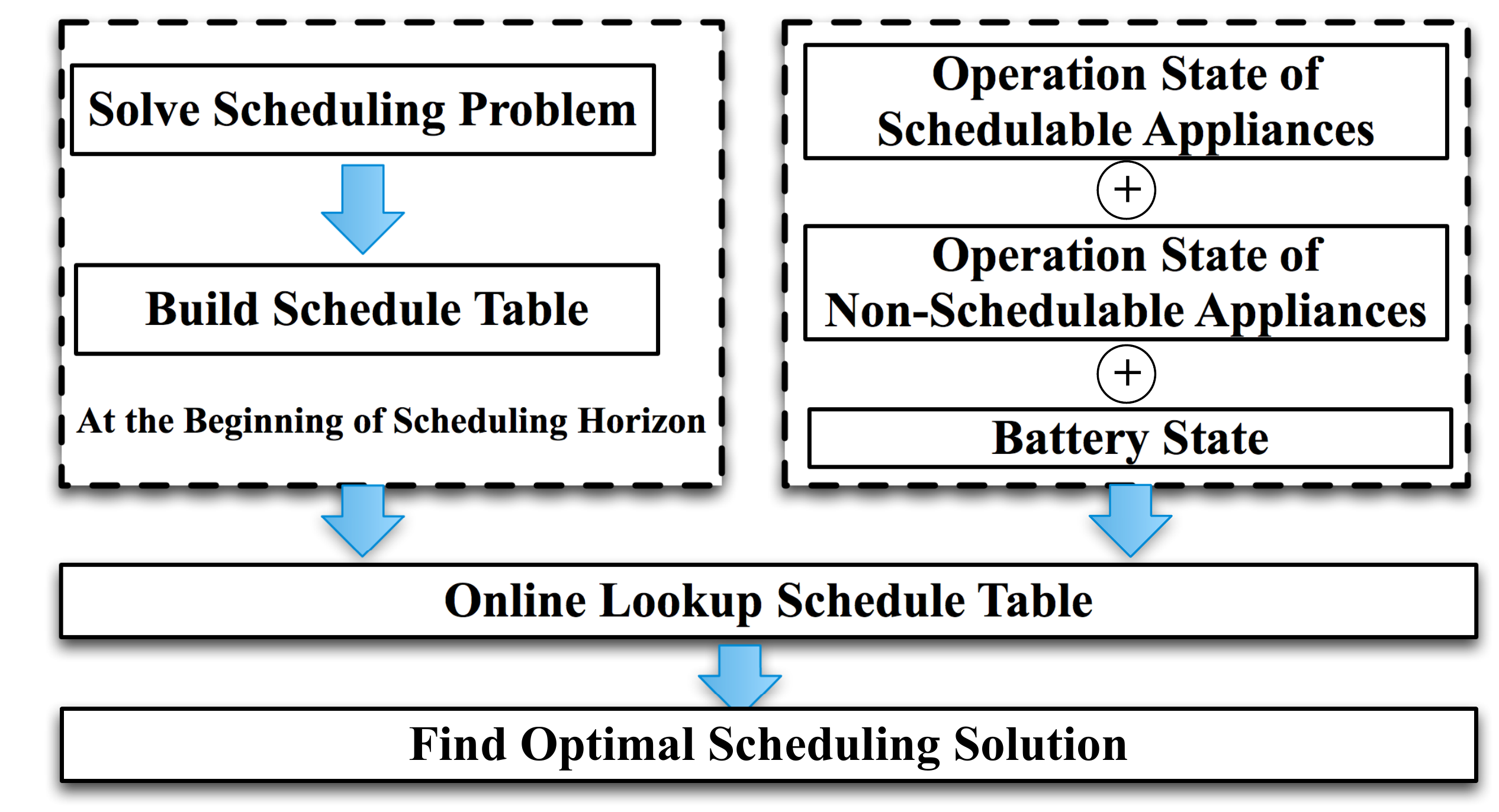}
	\par
  \end{centering}
  \vspace*{-2ex}
  \caption{The overall structure of the runtime appliance scheduler.}
  \label{fig:scheduling-overview} \vspace*{-3ex}
\end{figure}

\subsection{Scheduling Problem Formulation}
\label{subsec:schedulable-problem}
The power management unit (PMU), utilizing models in Section~\ref{sec:model}, is designed to schedule the battery and appliances operation. $x_{i}(t)$ ($i=\{1,2,\cdots,N\}$) for all schedulable appliances and $z(t)$ for the rechargeable battery. This optimization problem ($\sf{SP}$) can be expressed as
\begin{subequations}
\label{eqn:schedule}
\vspace*{-2ex}
\begin{alignat}{2}
\scalebox{0.8}
\sf{SP}:~&{\mbox{min.}}\;&&{\sf{E}}\{\sum_{t=1}^{\tau}C(\mathbf{V}(t))\} =
{\sf{E}}\{\sum_{t=1}^{\tau} \left(c(t)l(t)\right)\}\; \nonumber \\
&\quad&& ={\sf{E}}\{\sum_{t=1}^{\tau}c(t)\left(\mathbf{p}^{T}\min
\left(\mathbf{V}(t),\mathbf{R}(t)\right)+w(t)\right)\}\\
&\mbox{s.t.}
&& -\lambda\leq\left(\mathbf{p}^{T}\min
\left(\mathbf{V}(t),\mathbf{R}(t)\right)+w(t)\right)-\overline{l}\leq\lambda,\label{eqn:constraint1}\\
&\quad && r_{i}(t+1)=\max\left(r_{i}(t)-\sum_{j=1}^{t}x_{i}(j),0\right),\\
&\quad && \sum_{t=1}^{\tau}{x_{i}(t)} = 1, x_{i}(t)\in\{0,1\}, \forall i \in \{1, 2, \cdots, N\},\\
&\quad && B(t+1)=B(t)+z(t), 0\leq B(t) \leq B_{\max}\\
&\quad && z_{\min}\leq z(t)\leq z_{\max}\\
&\quad && t=1,\cdots,\tau. \nonumber
\end{alignat}
\end{subequations}
where $C(\mathbf{V}(t))=c(t)\left(\mathbf{p}^{T}\min \left(\mathbf{V}(t),\mathbf{R}(t)\right)+w(t)\right)$.
$z_{\min}$ is the maximum discharged power, which is also the minimum charged power; $z_{\max}$ is the maximum
charged power. $\sf{E}\{\cdot\}$ represents the expectation function. 
$\mathbf{V}(t)$ is the vector of decision variables for scheduling both the
schedulable appliances and the rechargeable battery over the scheduling horizon. 
The expectation function is needed in the objective function above due to
operation uncertainties of non-schedulable appliances. 

Balancing electricity
cost within a scheduling horizon in the presence of uncertainties of power consumption by non-schedulable appliances
is thus a dynamic process. However, solving the optimization problem
$(\sf{SP})$ at each time slot would be too time consuming. To address this
issue, we present a hybrid approach. 

\subsection{Scheduler Algorithm Design} 
\label{subsec:algorithm}
In this subsection,  we introduce a hybrid approach to solve the optimization
problem $\sf{SP}$ in~\eqref{eqn:schedule}. 
In the problem $\sf{SP}$, the uncertainty effects introduced by
non-schedulable appliances need to be carefully considered in the privacy
constraint~\eqref{eqn:constraint1}, Because the immediately active operations by
non-schedulable appliances lead to high peak load profiles and leak the
appliance features. Lacking a comprehensive consideration of the influence of
non-schedulable appliances can leak customers' privacy. To handle this issue, we consider the
worst influences of non-schedulable appliances' operation in the 
customers' privacy constraint~\eqref{eqn:constraint1}. We define a time zone
of a peak load profile over the scheduling horizon as a worst privacy
scenario. Mathematically, we define $\varphi=[t_{l},t_{u}]$ as a time zone for worst privacy scenario.
$t_{l}$ and $t_{u}$ are a lower bound and a upper bound of time slot, separately. 
To guarantee the scheduler satisfies the
privacy constraint~\eqref{eqn:constraint1} even when the non-schedulable appliances operate in the
worst scenario, we present a hybrid approach to handle it. 

First, we assume that non-schedulable appliances are active at the worst privacy
scenario. Once the non-schedulable appliances are assigned, we apply a dynamic
programming like algorithm to solve the optimization problem $\sf{SP}$ and
build a schedule table at the beginning of each scheduling horizon.
This table contains $\tau$ columns where $\tau$ is the total number of time slots in the scheduling horizon, and a number of rows corresponding to the different states (described by the $r_i(t)$ and $B(t)$ values).
Each entry in the table for time slot $t$ and state $s$ contains the assignment to $x_i(t)$ and $z(t)$ for a given set of $r_i(t)$ and $B(t)$ values.
Note that we determine the assignment of $x_i(t)$ and $z(t)$ by consulting the
table entry for time $t$ and state $s$ at the beginning of each time slot
$t$. Second, we iteratively update worst privacy scenario after each 
schedule table building. Once the worst privacy scenario does not change
anymore, we collect all these privacy scenarios as a potential set of operation time
zone for non-schedulable appliances. And then, we re-apply a dynamic
programming like algorithm to find the optimal assignment for schedulable
appliances and rechargeable battery.  
This hybrid approach effectively takes into consideration of both non-schedulable and schedulable appliances. 

\subsubsection{A Dynamic Programming Algorithm Design}
Given the operation assignments of non-schedulable appliances at the worst privacy
scenario, the schedule table can be obtained by solving optimization problem
$\sf{SP}$. To ensure that the size of a schedule table is manageable, we assume that the remaining
operation duration $r_{i}(t)$ ($\forall i$) and the battery state $B(t)$
are generally discretized to finite sets:
$r_{i}(t)\in \hiset =(\mathcal{H}_{i,1},\cdots,\mathcal{H}_{i,S_{i}})$ where $S_{i}=E_{i}/p_{i}$ ($\forall i$), and $B(t)\in \bset =(\mathcal{B}_{1},\cdots,\mathcal{B}_{M})$.
Here \hiset is the $S$-element remaining operation duration
set for the $i$-th schedulable appliance, and \bset is the $M$-element battery state set.
Let \gset be the state set with $M\times\prod_{i=1}^{N}S_{i}$ elements including battery state set \bset and remaining operation duration set \hiset ($\forall i$). 
Then, the structure of a schedule table can be shown as Table~\ref{tab:sched-table}. 
That is, a schedule table consists of $\tau$ sub-tables (corresponding to the
columns in Table~\ref{tab:sched-table}), each of which is for a specific time
slot $t$ from $1$ to $\tau$. Each sub-table consists of
${M\times\prod_{i=1}^{N}S_{i}}$ entries (corresponding to the state set), each
of which contains the assignment to $\mathbf{x}(t) = [x_1(t), x_2(t), \cdots,
x_N(t)]^T$. 

The formulation~\eqref{eqn:schedule} 
 in Section~\ref{sec:design3} forms the basis for constructing the schedule table.
Specifically, we adopt a backward recursive approach to solve problem
$\sf{SP}$. Given the initial state $\mathcal{G}_1$
at time slot $1$, ($\mathcal{G}_1\in \gset$), we denote
$\mathcal{F}_{1}(\mathcal{G}_1)$ as the optimal value of~\eqref{eqn:schedule},
which can be obtained recursively due to the principle of
optimality~\cite{bertsekas1995dynamic}. Because the value of $C(\mathbf{V}(t))$
can be precisely determined since the operation states of both schedulable and
non-schedulable appliances as well as the battery states prior to $t$ are
known. Only the future states are not known. Thus we can rewrite~\eqref{eqn:schedule} as
\begin{subequations}
\label{eqn:schedule_recursive}
\vspace*{-1ex}
\begin{alignat}{2}
\scalebox{0.8}
\sf{SP}:& && \mathcal{F}_{1}(\mathcal{G}_1) = 
\min \left\{C(\mathbf{V}(1)) + \sf{E}_{\gset} \left \{\sum_{t=2}^\tau
C(\mathbf{V}(t))\right\}\right\} \\
&\mbox{s.t.} \quad
&& -\lambda\leq\left(\mathbf{p}^{T}\min
\left(\mathbf{V}(t),\mathbf{R}(t)\right)+ w(t)\right)
-\overline{l}\leq\lambda,\\
&\quad &&r_{i}(t+1)=\lceil\max(r_{i}(t)-\sum_{j=1}^{t}x_{i}(j),0)\rceil\nonumber\\
&\quad &&  r_{i}(t),r_{i}(t+1)\in \hiset \label{eqn:rceil}\\
&\quad && B(t+1)=B(t)+z(t),\nonumber\\
&\quad && B(t),B(t+1)\in \bset, \label{eqn:bfloor}\\
&\quad && \sum_{t=1}^{\tau}{x_{i}(t)} = 1, x_{i}(t)\in\{0,1\},\\
&\quad && z_{\min}\leq z(t)\leq z_{\max}\\
&\quad && t=\tau-1,\tau-2,\cdots,1, \hspace*{1em} i=1,2,\cdots,N \nonumber
\vspace*{-3ex}
\end{alignat}
\end{subequations}
where $\mathbf{V}(1)$ corresponds to those $x_{i}(1)$ and $z(1)$ that result in state $\mathcal{G}_1$ at time $t=1$.
$\sf{E}_{\gset}$ describes the expectation operation over the all possible
states $\gset$.
Furthermore,~\eqref{eqn:rceil} denotes the state transition of remaining operation
duration for all schedulable appliances, which is constrained in the finite set \hiset. 
The \eqref{eqn:bfloor} denotes the battery state transition that is also constrained in the finite set \bset.

In a nutshell, the optimal values for arbitrary time slots $(t)$ to $1$ are
determined in a backward recursive manner by considering state transitions from
all possible state $\mathcal{G}_{t+1}$ at $t+1$ to $\mathcal{G}_{t}$ at $t$
($\mathcal{G}_{t},~\mathcal{G}_{t+1}\in\gset$) and the
constraints in \eqref{eqn:schedule_recursive}, which is shown as follows.
\begin{equation}
  \label{eqn:obj-recursive}
  \scalebox{0.8}{}{%
  \mathcal{F}_{t}(\mathcal{G}_{t})=\min\{C(\mathbf{V}(t))+\sf{E}_{\gset}\{\mathcal{F}_{t+1}(\mathcal{G}_{t+1})\},}  
\end{equation}
where $C(\mathbf{V}(t))$ is the electricity cost value in state
$\mathcal{G}_{t}$ at time slot $t$, which corresponds to those $x_{i}(t)$ and
$z(t)$. $\mathcal{G}_{t}$ and $\mathcal{G}_{t+1}$ are the state at time slot
$t$ and $t+1$, respectively. $\sf{E}_{\gset}\{\mathcal{F}_{t+1}(\mathcal{G}_{t+1})\}$ is the
expected sum of the minimal cost value over all possible states $\gset$ for time slots
$t+1,t+2,\cdots,\tau$.
Note that, the backward recursive approach firstly calculates the optimal value
$\mathcal{F}_{\tau}(\mathcal{G}_{\tau})$ for time slot $\tau$ with the known state
$\mathcal{G}_\tau$, which is shown as follow.
\begin{equation}
  \label{eqn:tau-value}
  \scalebox{0.8}{}{%
  \mathcal{F}_{\tau}(\mathcal{G}_{\tau})=\min C(\mathbf{V(\tau)}).}
\end{equation}

These processes are summarized in Algorithm 1.

\begin{table}[!t]
  \caption{\label{tab:sched-table} Structure of the schedule table.}
  \centering
  \begin{tabular}{c||c||c||c||c}
  \hline 
  \multirow{2}{*}{\tabincell{c}{State\\($B(t), r_{1}(t), r_{2}(t), \cdots, r_{N}(t)$)}} & \multicolumn{4}{c}{Time Slot} \tabularnewline
  \cline{2-5}  \cline{2-5}
  & 1 & 2 & $\cdots$ & $\tau$\tabularnewline
  \hline 
  $\mathcal{B}_{1}, \mathcal{H}_{1,1}, \mathcal{H}_{2,1}, \cdots, \mathcal{H}_{N,1}$ & $\mathbf{x}(1)$ & $\mathbf{x}(2)$&$\cdots$ & $\mathbf{x}(\tau)$\tabularnewline
  \hline 
  $\mathcal{B}_{2}, \mathcal{H}_{1,1}, \mathcal{H}_{2,1}, \cdots, \mathcal{H}_{N,1}$ &$\mathbf{x}(1)$&$\cdots$ & $\mathbf{x}(t)$\tabularnewline
  \hline 
  $\cdots$ & $\cdots$ & $\cdots$& $\cdots$& $\cdots$\tabularnewline
  \hline 
  $\mathcal{B}_{M}, \mathcal{H}_{1,S_1}, \mathcal{H}_{2,S_2}, \cdots, \mathcal{H}_{N,S_N}$ & $\mathbf{x}(1)$ & $\mathbf{x}(2)$& $\cdots$ & $\mathbf{x}(\tau)$\tabularnewline
  \hline 
  \end{tabular} \vspace*{-2ex}
\end{table}

\begin{algorithm}
  \begin{algorithmic}[1]
    \STATE {\bf Initialization}
    \STATE \quad Given initial state $\mathcal{G}_{1}$, 
    $M$-element battery state set $\bset$, $S$-element remaining
    operation duration set $\hiset$ for $i$-th schedulable appliance ($\forall
    i$), number of time slots $\tau$, and the assignment active operations of non-schedulable appliances
    \STATE {\bf Recursive calculation}
    \STATE \quad {\bf for} $t=\tau$ to $1$
    \STATE \quad\quad {\bf for} $\mathcal{G}_{t}=\mathcal{G}_{1} $ to
    $\mathcal{G}_{M\times\prod_{i=1}^{N}S_{i}}$
    \STATE \quad\quad\quad {\bf if} $t=\tau$, Calculate
    $\mathcal{F}_{\tau}(\mathcal{G}_\tau)$ by \eqref{eqn:tau-value};
    \STATE \quad\quad\quad {\bf else}, Calculate
    $\mathcal{F}_{t}(\mathcal{G}_t)$
    by solving \eqref{eqn:obj-recursive} until
    $\mathcal{F}_{1}(\mathcal{G}_1)$ is obtained.
    \STATE {\bf Return} Optimal value $\mathcal{F}_{1}(\mathcal{G}_{1})$,
    and its corresponding $x_{i}(t)$ ($\forall i$) and $z(t)$ for each time
    slot $t$ ($\forall t\in\tau$). 
  \end{algorithmic}
  \caption{\hspace{-3pt}: A Dynamic Programming Algorithm Design}
  \label{alg:DBDP}
\end{algorithm}

\subsubsection{Privacy-Aware Scheduler Design}
In the description of Algorithm 1, the assignment of active operations of
non-schedulable appliances are known before the dynamic process. 
Considering multiple worst privacy scenarios, we iteratively 
assign non-schedulable appliances into worst scenario and
generate it as a new privacy constraint, and then use assignment active
operations of non-schedulable appliances. Finally, the problem $\sf{SP}$
in~\eqref{eqn:schedule_recursive} is solved. To simplify the
problem, we assume that the power consumption of any non-schedulable appliance
within each time slot during the appliance's active operation is a constant
value, defined as $p_{j}$ ($\forall j=1,2,\cdots, W$). $W$ is the total number
of non-schedulable appliances.
Associated with the set of time zone for worst privacy scenarios ($\Omega$), we
define a binary variable $h_{j}(t,\varphi)$. If $h_{j}(t,\varphi)=1$, the
$j$-th non-schedulable appliance actives its operation at time slot $t$ in
worst privacy scenario $\varphi$.
Otherwise $h_{j}(t,\varphi)=0$. The definition of $h_{j}(t,\varphi)$ is shown
as follows.
\vspace*{-2ex}
\begin{equation}
  \label{eqn:h}
  \scalebox{0.8}{}
    \begin{aligned}
      h_j(t,\varphi) &=
      \begin{cases}
        1 & \text{if } t\in\varphi~(\varphi\in\Omega),\\ 
      0 & \text{else } \text{Otherwise},
     \end{cases}
   \end{aligned}
   \forall j=1,2,\cdots,M.
\end{equation}
Thus, we re-write the load model for non-schedulable appliances. 
The total power consumptions of all non-schedulable appliances at time slot $t$ is expressed as
\vspace*{-5ex}
\begin{equation}
  \label{eqn:w}
  \scalebox{0.8}{}{%
  w(t,\varphi)=\sum_{j=1}^{W}h_j(t,\varphi)p_{j}.}
  \vspace*{-5ex}
\end{equation}
According to~\eqref{eqn:l} and~\eqref{eqn:w}, the updated privacy model can be re-written as 
\vspace*{-7ex}
\begin{equation}
  \label{eqn:privacy2}
  \scalebox{0.8}{}{%
  -\lambda\leq g\left(\mathbf{V}(t),\varphi\right)\leq\lambda,~\forall\varphi\in\Omega, }
  \vspace*{-5ex}
\end{equation}
where $g\left(\mathbf{V}(t),\varphi\right)=\mathbf{p}^{T}\min
\left(\mathbf{V}(t),\mathbf{R}(t)\right)+w(t,\varphi)-\overline{l}$.
Taking account of the set of worst privacy scenarios ($\Omega$) and updated privacy
constraint~\eqref{eqn:privacy2}, the scheduling problem $(\sf{SP}(\Omega))$ is expressed as
\begin{subequations}
\label{eqn:schedule2}
\vspace*{-3ex}
\begin{alignat}{2}
\small
\scalebox{0.8}{}
&\sf{SP}(\Omega):&&
\mathcal{F}_{1}(\mathcal{G}_1) = \min \left\{C(\mathbf{V}(1)) + \sf{E}_{\gset} \left \{\sum_{t=2}^\tau
C(\mathbf{V}(t))\right\}\right\} \\
&\mbox{s.t.}\quad
&& -\lambda\leq
g\left(\mathbf{V}(t),\varphi\right)\leq\lambda,~\forall\varphi\in\Omega\label{eqn:constraint2}\\
&\quad &&
r_{i}(t+1)=\max\left(r_{i}(t)-\sum_{j=1}^{t}x_{i}(j),0\right),\label{eqn:c1}\\
&\quad && \sum_{t=1}^{\tau}{x_{i}(t)} = 1, x_{i}(t)\in\{0,1\}, \forall i \in
\{1, 2, \cdots, N\},\label{eqn:c2}\\
&\quad && B(t+1)=B(t)+z(t), 0\leq B(t) \leq B_{\max}\label{eqn:c3}\\
&\quad && z_{\min}\leq z(t)\leq z_{\max}\label{eqn:c4}\\
&\quad && t=1,\cdots,\tau. \nonumber
\end{alignat}
\end{subequations}

In the optimization problem $\sf{SP}(\Omega)$, ~\eqref{eqn:constraint2}
includes the set of worst privacy scenarios $\Omega$, which consists of
infinite constraints.
Inspired by semi-infinite programming technique, we introduce a hybrid approach
to solve the problem $\sf{SP}(\Omega)$, which is summarized in Algorithm 2.
In Algorithm 2, we denote $\mathbf{V}^{*}_{k}$ and $\mathbf{V}^{*}_{k-1}$ as
the feasible solutions of problem $\sf{SP}(\Omega_{k})$ and
$\sf{SP}(\Omega_{k-1})$, respectively.
Note That at the $k$-th iteration of the Algorithm 2, we update the 
privacy constraint~\eqref{eqn:constraint2} using
$\Omega_{k}=\Omega_{k-1}\cup\{\varphi_{k}\}$. And then we solve a subproblem
$\sf{SP}(\Omega_{k})$ with $\varphi_{k}$ satisfying worst privacy scenario requirement.
Meanwhile, the Algorithm 2 converges after several iterations and produces an approximate optimal solution for $\sf{SP}(\Omega)$ if the
worst privacy scenario doesn't change anymore.

\noindent\textbf{Convergence analysis:} 
Let $\mathcal{V}$ denote the feasible region of problem $\sf{SP}$. At the
$k$-th iteration, when constraints~\eqref{eqn:c1},~\eqref{eqn:c2},~\eqref{eqn:c3} and~\eqref{eqn:c4}
are satisfied, the feasible region of problem $\sf{SP}(\Omega_{k})$ with worst
privacy scenarios $\Omega_{k}$ is expressed as follows.
\begin{equation}
  \label{eqn:feasible-region}
  \scalebox{0.8}{}{%
  \mathcal{V}_{k}:=\{\mathbf{V}|-\lambda\leq g\left(\mathbf{V},\varphi\right)\leq\lambda,~\forall\varphi\in\Omega_{k}\}}
\end{equation}
To prove Algorithm 2 converge to an optimal solution when
$k\rightarrow\infty$, the following two lemmas are needed.
\begin{lm}
  \label{def:convergence}
  For each $k\geq1$, if Algorithm 2 does not stop at this iteration,  
  $\mathcal{V}_{k}\subseteq\mathcal{V}_{k-1}$ holds, where
  $\mathcal{V}_{k-1}$ and $\mathcal{V}_{k}$ are the feasible regions of
  optimization problem $\sf{SP}(\Omega_{k-1})$ and $\sf{SP}(\Omega_{k})$,
  respectively.
\end{lm}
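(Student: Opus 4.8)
The plan is to establish the set containment $\mathcal{V}_{k}\subseteq\mathcal{V}_{k-1}$ as a direct consequence of the monotone growth of the scenario index set $\Omega_{k}$ under the update rule used in Algorithm 2. The governing observation is that appending constraints can only shrink a feasible region, never enlarge it, so the core of the argument is purely set-theoretic rather than analytic; I do not expect to need any structural property of the function $g$.

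First I would record that, by construction at the $k$-th iteration, $\Omega_{k}=\Omega_{k-1}\cup\{\varphi_{k}\}$, which immediately yields the index-set inclusion $\Omega_{k-1}\subseteq\Omega_{k}$. Next I would take an arbitrary $\mathbf{V}\in\mathcal{V}_{k}$ and unpack the definition in~\eqref{eqn:feasible-region}: this means that $-\lambda\leq g\left(\mathbf{V},\varphi\right)\leq\lambda$ holds for every $\varphi\in\Omega_{k}$. Because $\Omega_{k-1}$ is contained in $\Omega_{k}$, these same inequalities hold in particular for every $\varphi\in\Omega_{k-1}$, which is exactly the defining condition for membership in $\mathcal{V}_{k-1}$. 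Hence $\mathbf{V}\in\mathcal{V}_{k-1}$, and since $\mathbf{V}$ was arbitrary, $\mathcal{V}_{k}\subseteq\mathcal{V}_{k-1}$ follows.

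The one point that warrants care---and the closest thing to an obstacle in what is otherwise an elementary monotonicity argument---is the hypothesis that Algorithm 2 does not stop at this iteration. I would use this only to guarantee that a genuinely new worst-case scenario $\varphi_{k}$ is produced, so that the update $\Omega_{k}=\Omega_{k-1}\cup\{\varphi_{k}\}$ is actually invoked and $\Omega_{k}$ is well defined at step $k$. Note that the containment itself does not even require $\varphi_{k}\notin\Omega_{k-1}$, so the lemma holds regardless; strict containment arises precisely in the case where the newly added scenario tightens the privacy constraint~\eqref{eqn:constraint2}, removing at least one point from $\mathcal{V}_{k-1}$. I would therefore keep the statement at the level of $\subseteq$ and flag the strict case as a remark rather than a claim.
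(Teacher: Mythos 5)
Your proof is correct, and it takes a genuinely different (and cleaner) route than the paper. You give a direct monotonicity argument: $\Omega_{k-1}\subseteq\Omega_{k}$ by the update rule, so any $\mathbf{V}$ satisfying $-\lambda\leq g(\mathbf{V},\varphi)\leq\lambda$ for all $\varphi\in\Omega_{k}$ satisfies it in particular for all $\varphi\in\Omega_{k-1}$, hence $\mathcal{V}_{k}\subseteq\mathcal{V}_{k-1}$. The paper instead argues by contradiction: it assumes $\mathcal{V}_{k}\nsubseteq\mathcal{V}_{k-1}$, infers from this that $\mathcal{V}_{k-1}\subseteq\mathcal{V}_{k}$ and that $\mathcal{V}_{k-1}$ is already feasible for $\sf{SP}(\Omega_{k})$, concludes $\Omega_{k}=\Omega_{k-1}$, and contradicts the existence of $\varphi_{k}$. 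That argument is logically shakier than yours: the negation of $\mathcal{V}_{k}\subseteq\mathcal{V}_{k-1}$ does not imply $\mathcal{V}_{k-1}\subseteq\mathcal{V}_{k}$, and the step from ``$\mathcal{V}_{k-1}$ is feasible for $\sf{SP}(\Omega_{k})$'' to ``$\Omega_{k}=\Omega_{k-1}$'' conflates equality of feasible regions with equality of constraint index sets. Your direct argument sidesteps both issues and buys a stronger conclusion for free: as you observe, the containment holds whether or not $\varphi_{k}\notin\Omega_{k-1}$, with the ``does not stop'' hypothesis needed only so that $\Omega_{k}$ is defined at step $k$, and strictness of the containment is correctly relegated to a remark rather than claimed. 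Your version is the one to keep.
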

\begin{proof}
  By contradiction, suppose this Lemma is false: for each $k\geq1$, if
  Algorithm 2 does not stop at this iteration, the
  $\mathcal{V}_{k}\nsubseteq\mathcal{V}_{k-1}$. 
  This means that a feasible region $\mathcal{V}_{k-1}$
  is a subset of $\mathcal{V}_{k}$. Therefore, $\mathcal{V}_{k-1}$ is also the
  feasible regions of problem $\sf{SP}(\Omega_{k})$, which satisfies privacy
  constraint~\eqref{eqn:constraint2}. Based on the definition of feasible
  region, the following equation~\eqref{eqn:contradiction} holds.
  \begin{equation}
    \label{eqn:contradiction}
    \scalebox{0.8}{}{
    \mathcal{V}_{k-1}:=\{\mathbf{V}|-\lambda\leq
    g\left(\mathbf{V},\varphi\right)\leq\lambda,~\forall\varphi\in\Omega_{k}\}}
  \end{equation}
  Meanwhile, $\mathcal{V}_{k-1}$ is also the feasible regions of problem
  $\sf{SP}(\Omega_{k-1})$. Therefore, $\mathcal{V}_{k}\nsubseteq\mathcal{V}_{k-1}$ yields $\Omega_{k}=\Omega_{k-1}$.
  However, $\Omega_{k}:=\Omega_{k-1}\cup\{\varphi_{k}\}$ holds and
  $\varphi_{k}$ exists, because Algorithm 2 does not stop
  at $k$-th iteration, contradicting our assumption. Thus, this lemma holds.
\end{proof}
  
\begin{lm}
  For $k\geq1$, if $k\rightarrow\infty$, the subproblem $\sf{SP}(\Omega_{k})$ has an optimal
  solution.
\end{lm}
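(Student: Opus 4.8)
The plan is to prove existence by a compactness argument, i.e.\ the Weierstrass extreme value theorem applied to each finite-$k$ subproblem, and then to pass to the limit using the nested structure of the feasible regions guaranteed by Lemma~\ref{def:convergence}. Concretely, for each fixed $k$ I would show that the feasible region $\mathcal{V}_{k}$ is a nonempty compact set and that the objective $\sum_{t=1}^{\tau}C(\mathbf{V}(t))$ is continuous on it, so a minimizer is attained; I would then let $k\to\infty$ and argue that the intersection $\bigcap_{k\geq1}\mathcal{V}_{k}$ — which is exactly the feasible region of the full problem $\sf{SP}(\Omega)$ — remains nonempty and compact, whence the limiting program also admits an optimal solution.

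First I would verify compactness of $\mathcal{V}_{k}$. The decision vector $\mathbf{V}$ collects the binary indicators $x_{i}(t)\in\{0,1\}$ subject to $\sum_{t}x_{i}(t)=1$ together with the charge/discharge variables $z(t)\in[z_{\min},z_{\max}]$, while the induced states $r_{i}(t)\in\hiset$ and $B(t)\in\bset$ lie in finite discretized sets; hence $\mathbf{V}$ ranges over a bounded subset of a finite-dimensional space. Each defining relation — the privacy band $-\lambda\leq g(\mathbf{V}(t),\varphi)\leq\lambda$, the state recursions~\eqref{eqn:c1} and~\eqref{eqn:c3}, the selection constraint~\eqref{eqn:c2}, and the box constraint~\eqref{eqn:c4} — is a closed condition, since $g$ is continuous (a composition of linear maps with the continuous operator $\min(\mathbf{V}(t),\mathbf{R}(t))$ plus the constant $w(t,\varphi)-\overline{l}$). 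Therefore $\mathcal{V}_{k}$ is closed and bounded, i.e.\ compact.

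Next I would record that the objective is continuous: $C(\mathbf{V}(t))=c(t)(\mathbf{p}^{T}\min(\mathbf{V}(t),\mathbf{R}(t))+w(t))$ is continuous in $\mathbf{V}$, and the expectation $\sf{E}_{\gset}\{\cdot\}$ is a finite convex combination over the finite state set $\gset$, hence preserves continuity. Assuming $\mathcal{V}_{k}\neq\varnothing$ — which holds by construction, because Algorithm 2 only forms $\Omega_{k}=\Omega_{k-1}\cup\{\varphi_{k}\}$ and proceeds while it carries a feasible point $\mathbf{V}^{*}_{k-1}$, so nonemptiness propagates inductively — the Weierstrass theorem yields a minimizer of $\sf{SP}(\Omega_{k})$ for every finite $k$.

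Finally I would pass to the limit. By Lemma~\ref{def:convergence} the regions are nested, $\mathcal{V}_{1}\supseteq\mathcal{V}_{2}\supseteq\cdots$, and each is nonempty and compact, so Cantor's intersection theorem gives that $\mathcal{V}_{\infty}:=\bigcap_{k\geq1}\mathcal{V}_{k}$ is nonempty and compact. Since $\mathcal{V}_{\infty}$ is precisely the set cut out by the privacy constraint~\eqref{eqn:privacy2} over all $\varphi\in\Omega$, the continuous objective attains its minimum there, so $\sf{SP}(\Omega)$ has an optimal solution as $k\to\infty$. The main obstacle is the nonemptiness claim: I must rule out that appending successive worst-case scenarios ever renders the program infeasible, which ultimately rests on the battery being able to absorb the worst-case non-schedulable loads while keeping $l(t)$ within the band $\overline{l}\pm\lambda$; I would handle this through the inductive persistence argument above, and, if necessary, under the standing assumption that $B_{\max}$, $z_{\min}$, $z_{\max}$ and $\lambda$ are chosen so that the initial region $\mathcal{V}_{1}$ is feasible.
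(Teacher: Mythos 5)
Your route is genuinely different from the paper's, and in most respects more careful. The paper's own proof is a three-sentence assertion: as $k\to\infty$ the scenario set $\Omega_k$ ``reaches'' the set $\Phi$ of all possible worst privacy scenarios, the feasible region then becomes ``a unique and smallest feasible region'' by Lemma~\ref{def:convergence}, and therefore an optimal solution exists. It never addresses why a minimum is attained over that limiting region, nor why the region is nonempty. Your Weierstrass argument (compactness of $\mathcal{V}_k$ from the binary $x_i(t)$, the box-constrained $z(t)$, and closed constraint sets; continuity of the objective; the expectation being a finite average over \gset) supplies exactly the attainment step the paper skips, and Cantor's intersection theorem is the right tool for passing to the limit along the nested chain furnished by Lemma~\ref{def:convergence}. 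Both proofs share the unjustified identification of $\bigcup_{k}\Omega_{k}$ with $\Phi$ (Algorithm 2 adds one scenario per iteration and stops as soon as the worst scenario repeats, so nothing guarantees it enumerates all of $\Phi$), so I do not count that against you.

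The one genuine gap is your nonemptiness claim. You assert that $\mathcal{V}_{k}\neq\varnothing$ ``propagates inductively'' because Algorithm 2 carries the feasible point $\mathbf{V}^{*}_{k-1}$ into iteration $k$. But $\mathbf{V}^{*}_{k-1}$ is feasible only for $\mathsf{SP}(\Omega_{k-1})$; the new scenario $\varphi_{k}$ is chosen precisely as a worst case for the schedule $\mathbf{V}^{*}_{k-1}$, so in any non-terminating iteration $\mathbf{V}^{*}_{k-1}$ typically violates the newly added instance of~\eqref{eqn:constraint2} and hence $\mathbf{V}^{*}_{k-1}\notin\mathcal{V}_{k}$. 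Feasibility of the previous iterate therefore certifies nothing about $\mathcal{V}_{k}$, and the induction does not close. What is actually needed is the existence of a single schedule satisfying the privacy band simultaneously for every scenario in $\Phi$ --- in effect a sufficiency condition relating $B_{\max}$, $z_{\min}$, $z_{\max}$ and $\lambda$ to the worst-case non-schedulable load $\max_{\varphi}w(t,\varphi)$ --- imposed as a hypothesis or proved separately. Your fallback ``standing assumption'' is the honest fix, but the inductive argument you lead with is not a proof of it. (The paper's proof has the same hole; it simply never mentions it.)
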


\begin{proof} When $k\rightarrow\infty$, the set of worst privacy scenarios $\Omega_{k}\rightarrow\Phi$, where
  $\Phi$ is the set of all possible worst privacy scenarios. Once
  $\Omega_{k}$ reaches $\Phi$, the feasible region $\mathcal{V}_{k}$ of $\sf{SP}(\Omega_{k})$
  becomes a unique and smallest feasible region due to lemma 1. 
  Thus, the subproblem $\sf{SP}(\Omega_{k})$ has an optimal solution.
\end{proof}


\begin{algorithm}
  \begin{algorithmic}[1]
    \STATE {\bf Initialization}
    \STATE \quad Apply the Algorithm 1 to solve the problem $\sf{SP}$ without considering
    non-schedulable appliances
    \STATE \quad Find the worst privacy scenario $\varphi_{0}$. Set $\Omega_{0}:=\{\varphi_{0}\}$ 
    \STATE {\bf Iterative programming calculation}
    \STATE \quad Apply the Algorithm 1 to solve $\sf{SP}(\Omega_{0})$ and
    obtain an optimal scheduling variable $\mathbf{V}^{*}_{0}$
    \STATE \quad Set $k:=1$
    \STATE \quad Find the worst privacy scenario $\varphi_{k}$
    \STATE \quad\quad {\bf if} $\varphi_{k}=\varphi_{k-1}$, {\bf then} STOP,
    {\bf return} optimal scheduling variable $\mathbf{V}^{*}_{k}$ 
    \STATE \quad\quad {\bf else}, let $\Omega_{k}:=\Omega_{k-1}\cup\{\varphi_{k}\}$
    \STATE \quad Apply the Algorithm 1 to solve problem $\sf{SP}(\Omega_{k})$ to obtain
    an optimal scheduling variable $\mathbf{V}^{*}_{k}$
    \STATE \quad Set $k:=k+1$ and got to Step 7
  \end{algorithmic}
  \caption{\hspace{-3pt}: Iterative Alternative Algorithm}
  \label{alg:DBDP}
\end{algorithm}

\section{Numerical Simulations}\label{sec:results}

This section evaluates the proposed runtime scheduling framework using
real-world household data. The proposed iterative alternative algorithm
conducts the worst scenario optimization to effectively generate cost-efficient scheduling solution with privacy protection guarantee. 
Section~\ref{subsec:Real world benchmark} demonstrates scheduling results based on real world power consumption. Section~\ref{subsec:schedulerEvaluation} campares the evaluation results of the system without a non-schedulable appliance to the system with a non-schedulable appliance. Section~\ref{subsec:SensiAna} demonstrates the impacts of battery capacity on the scheduler behavior.  

\subsection{Real world benchmark}
\label{subsec:Real world benchmark}

\noindent We first summarize the simulation-based experimental settings. 

\noindent{\bf Appliance data sets and types:} 
We have selected five household appliances data from a ECO data set~\cite{ECO}.
This ECO data set includes aggregate and plug-in appliances' power consumptions of households in Switzerland over
a period of 8 months. These data were collected customer daily usage with $86,400$ measurements per
day. Two types of appliances are considered:
schedulable appliances (i.e., clothes dryer, washing machine, dishwasher, stove, and refrigerator)
and non-schedulable appliances (i.e., PC, stereo, TV, and laptop).
To demonstrate the proposed approach, we consider a household having three
schedulable appliances and two non-schedulable appliances. We set the entire
work load and power consumptions of three schedulable appliances ($1$,
$2$, and $3$) as follows: ($E_{1}=70.7W$,
$p_{1}=35.38Wh$), ($E_{2}=313.2W$, $p_{2}=156.59Wh$), and ($E_{3}=230.2W$,
$p_{3}=76.73Wh$). For the two non-schedulable appliances ($4$ and $5$), the entire work load
$E_{4}=106.97W$ and $E_{5}=33.73W$. Based on the ECO data set~\cite{ECO}, the
appliance $4$ and $5$ usually operates at time slot $t=[7,12]$ and
$t=[1,6]$ with a high possibility, respectively. 
The remaining operation duration set \hiset is discretized as 
\begin{equation}
\scalebox{0.8}{}{%
    \hiset = \{0,1,\cdots,\lceil E_{i}/p_{i}\rceil\}. \nonumber}
\end{equation}
The electricity price is
adopted by the public released data from Ameren Corporation~\cite{price}.  

\noindent{\bf Rechargeable battery parameters:}
The maximum battery capacity $B_{max}=750$ Wh and the initial state of battery
($B(1)$) is 0. To apply the proposed iterative alternative algorithm, we  
discretized the state set \bset of battery as
\begin{equation}
\scalebox{0.8}{}{%
    \bset = \{0,1,2, \cdots, B_{max}\}. \nonumber}
\end{equation}
The battery charged/discharged power set $\mathcal{Z} = \{z|50Wh \leq z \leq 250Wh\}$.
To speed up the table building process, a local version of the algorithm was used in experiment where $z$ was not discretized. 
In terms of privacy concerns, we set $\lambda = 80$.

\noindent{\bf Scheduling horizon:}
The length of the scheduling time slot is one hour. The overall scheduling
horizon is set to be 12 hours. The simulation result is shown in Fig.~\ref{fig:9}. It can be observed that the original power consumption curve exceeded the pravicy constrain and the shape peak is sensitive to be detected by attacker. The scheduled power consumption curve followed the privacy constrain between the upper boundary and lower boundary and the curve is relative smooth.

\begin{figure}[tb!]
  \begin{centering}
	\includegraphics[width=0.42\textwidth]{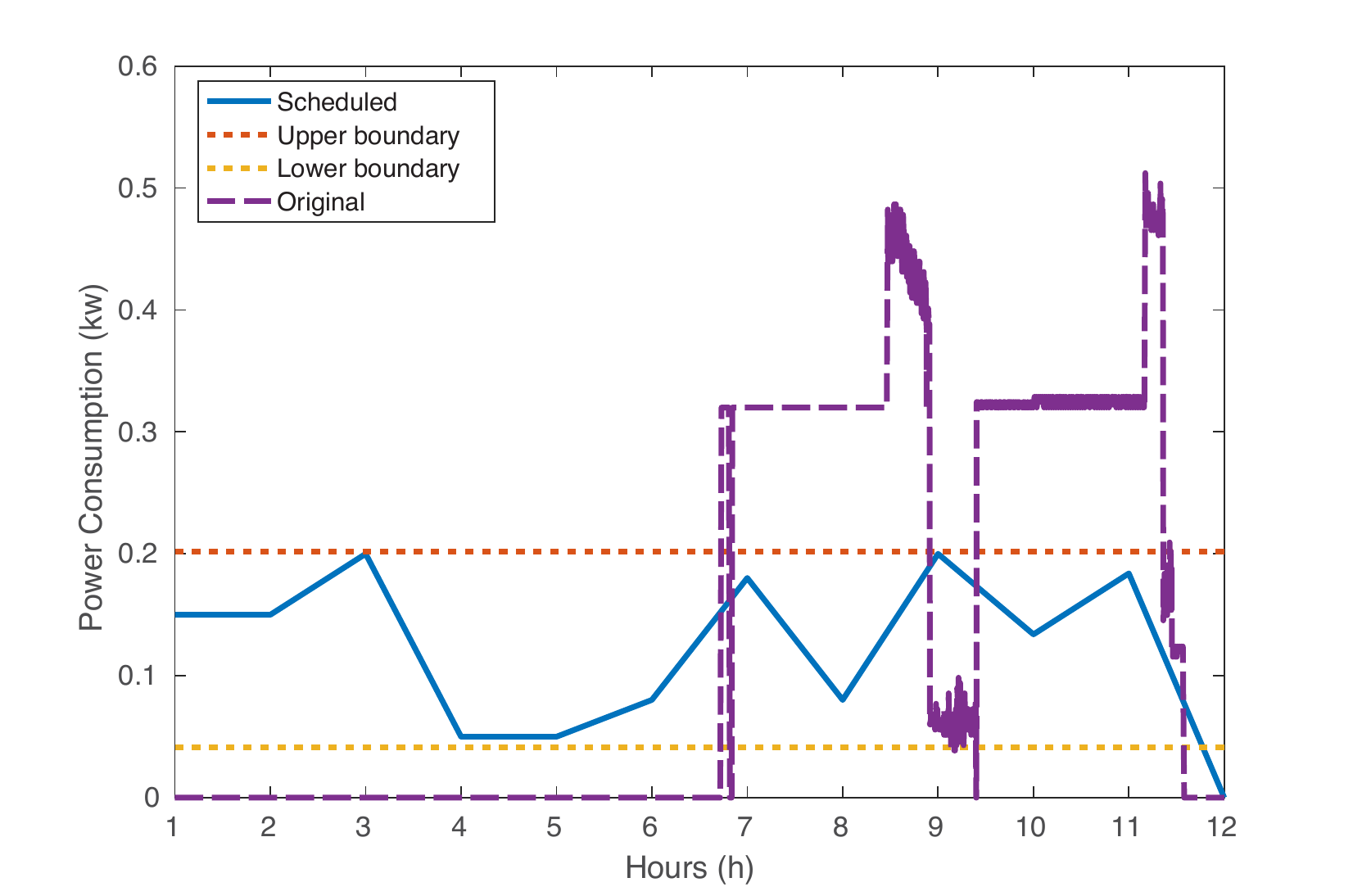}
	\par
  \end{centering}
    \vspace*{-2ex}
  \caption{Real world scheduling result.}
  \label{fig:9}     \vspace*{-4ex}
\end{figure}


\begin{table}[h]
\centering
 \vspace*{-3ex}
\caption{Evaluation environment setting}
\label{tab:EvaluationSetting-table}
\begin{tabular}{|l|l|}
\cline{1-2}
Battery total capacity                                                 & 0.2 kW              \\ \hline
Battery charging/discharging rate                                      & 0 kW to 0.1 kW                  \\ \hline
Non-schedulable appliances Runtime                                     & 1 (hours) \\ \hline
\end{tabular}
 \vspace*{-3ex}
\end{table}

\subsection{Scheduler Evaluation}
\label{subsec:schedulerEvaluation}
\vspace*{-3ex}
\begin{figure}[tb!]
  \begin{centering}
	\includegraphics[width=0.42\textwidth]{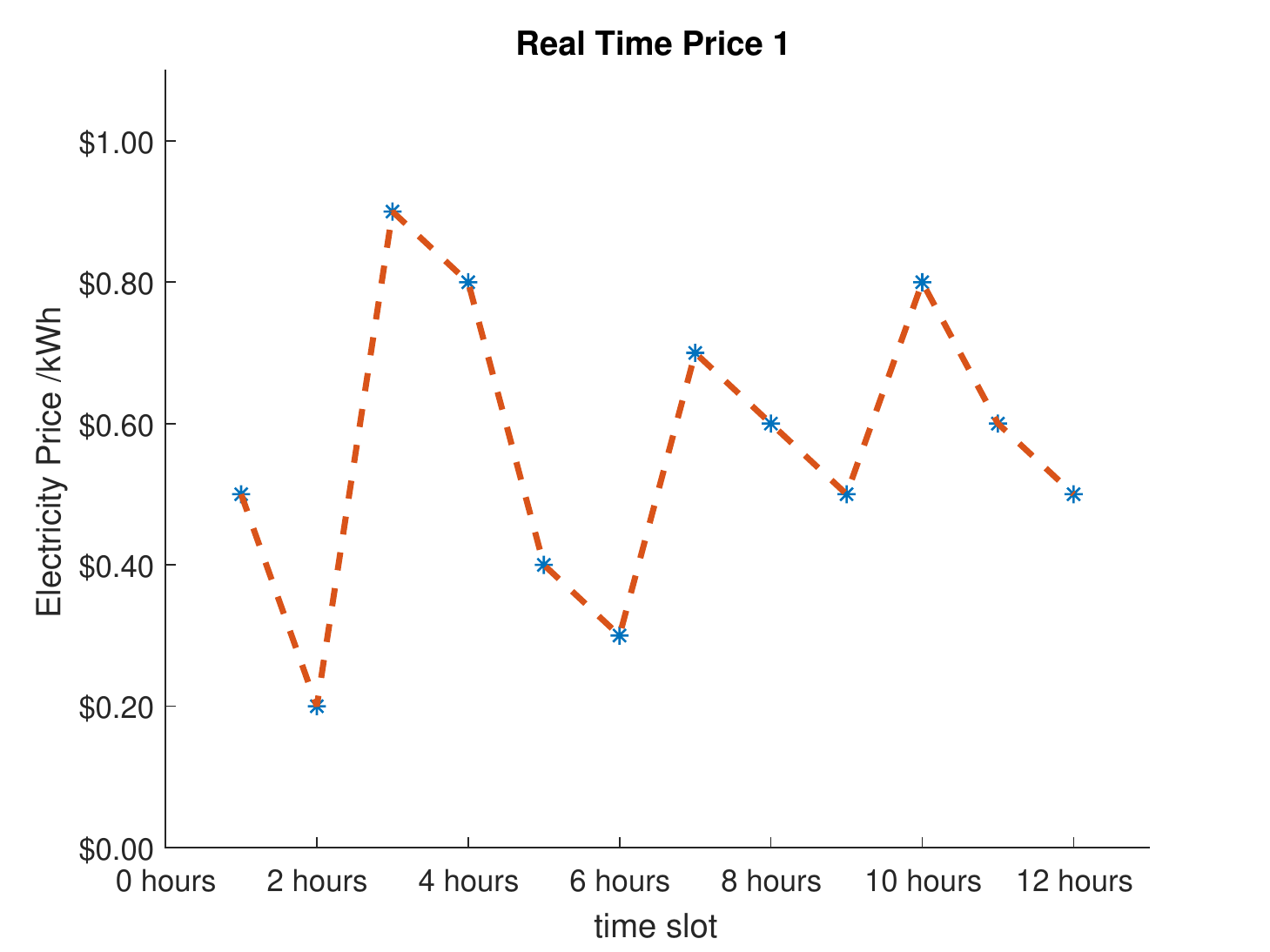}
  \end{centering}
    \vspace*{-3ex}
  \caption{The real time price for performance evaluation.}
  \label{fig:price1}     \vspace*{-3ex}
\end{figure}

\begin{table}[h]
\centering
\caption{Schedulable module evaluation results}
\label{tab:operation-table1}
\begin{tabular}{|l||l|l|l|l|l|l|l|l|l|l|l|l|l}
\cline{1-13}
Time Slot & 1 & 2 & 3 & 4 & 5 & 6 & 7 & 8 & 9 & 10 & 11 & 12              \\ \hline
App 1 & 0 & 0 & 0 & 0 & 0 & 1 & 1 & 1 & 1 & 1 & 0 & 0                  \\ \hline
App 2 & 0 & 0 & 0 & 1 & 1 & 1 & 0 & 0 & 0 & 0 & 0 & 0          \\ \hline
App 3 & 1 & 1 & 0 & 0 & 0 & 0 & 0 & 0 & 0 & 0 & 0 & 0
\\ \hline 
\end{tabular}
 \vspace*{-3ex}
\end{table}

\begin{figure}[tb!]
  \begin{centering}
	\includegraphics[width=0.42\textwidth]{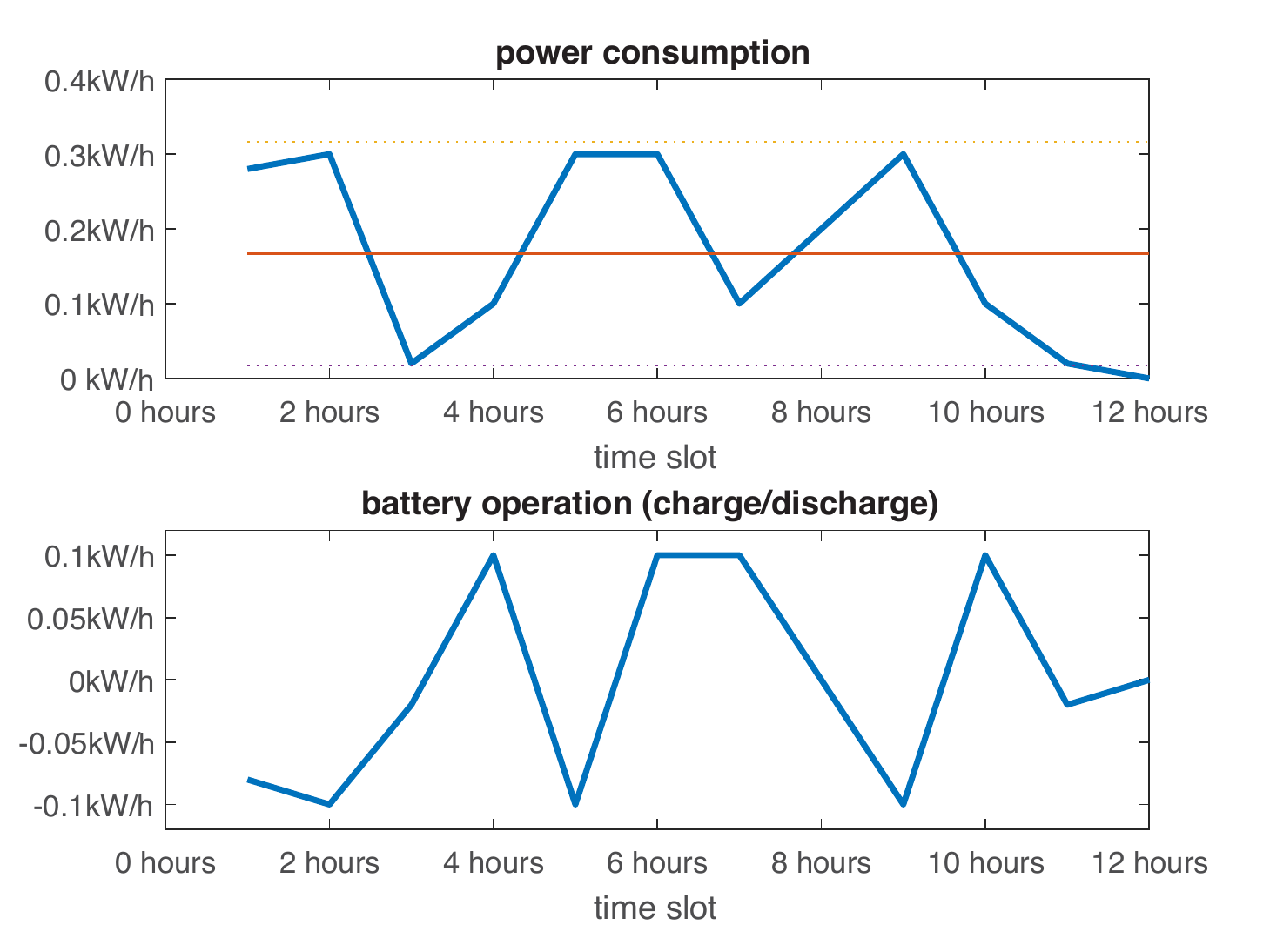}
	\par
  \end{centering}
    \vspace*{-3ex}
  \caption{Scheduling results for schedulable module. }
  \label{fig:sch_3app}     \vspace*{-3ex}
\end{figure}

\begin{table}[h]
\centering
\caption{Non-schedulable module evaluation results}
\label{tab:operation-table2}
\begin{tabular}{|l||l|l|l|l|l|l|l|l|l|l|l|l|l}
\cline{1-13}
Time Slot & 1 & 2 & 3 & 4 & 5 & 6 & 7 & 8 & 9 & 10 & 11 & 12              \\ \hline
App 1 & 0 & 0 & 0 & 0 & 1 & 1 & 1 & 1 & 1 & 0 & 0 & 0                  \\ \hline
App 2 & 0 & 0 & 0 & 0 & 0 & 0 & 0 & 0 & 1 & 1 & 1 & 0          \\ \hline
App 3 & 0 & 1 & 1 & 0 & 0 & 0 & 0 & 0 & 0 & 0 & 0 & 0
\\ \hline 
\end{tabular}
 \vspace*{-3ex}
\end{table}

\begin{figure}[tb!]
  \begin{centering}
	\includegraphics[width=0.42\textwidth]{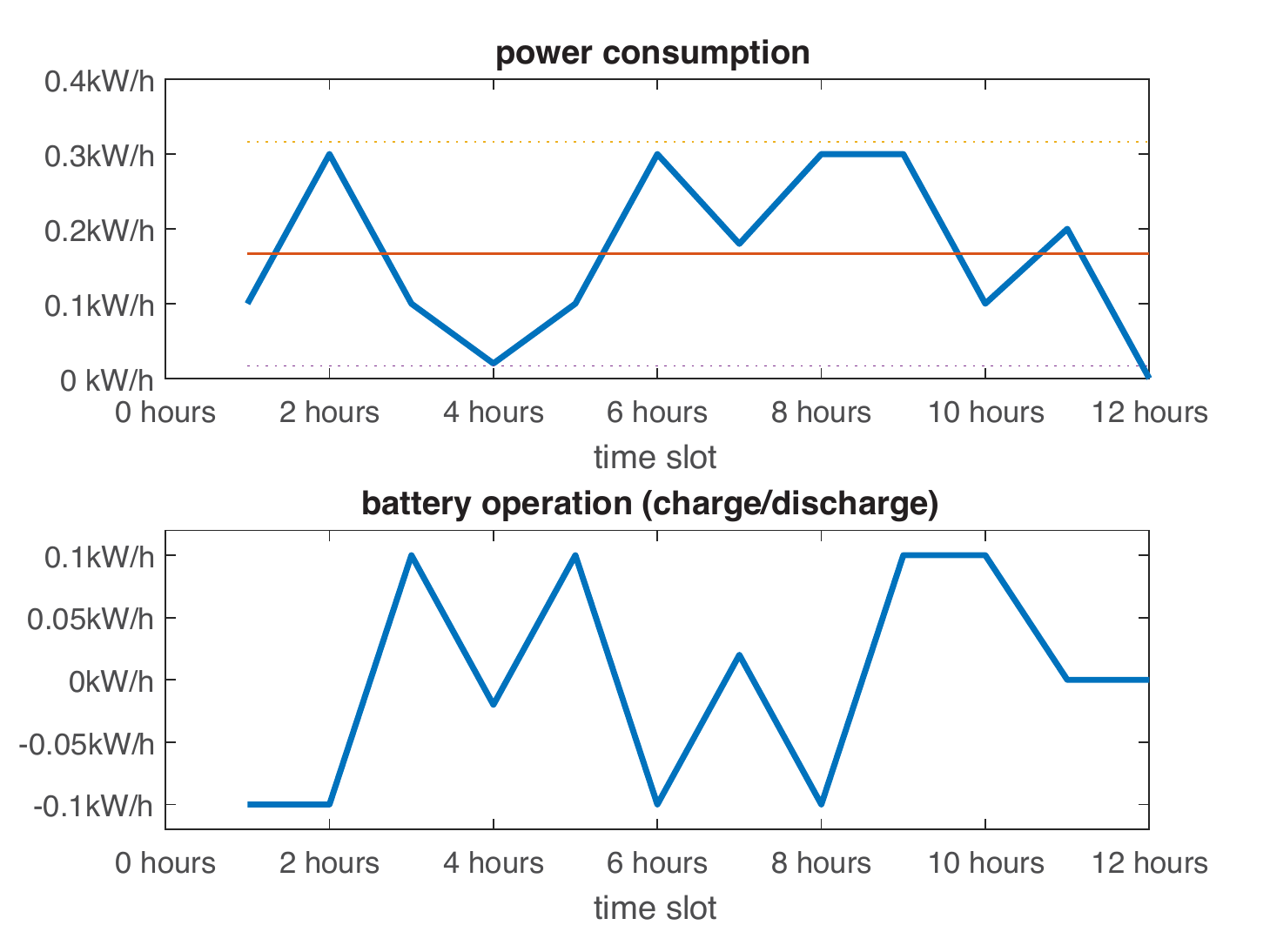}
	\par
  \end{centering}
    \vspace*{-3ex}
  \caption{Scheduling results for non-schedulable module. }
  \label{fig:nonsch_3app}     \vspace*{-4ex}
\end{figure}

Section~\ref{subsec:schedulerEvaluation} will evaluate the system using the setting as shown in Table~\ref{tab:EvaluationSetting-table}. Considering the real-world scenario, the charging/discharging rate of a battery varies in the scheduling horizon. Given the unknown starting time of non-schedulable appliance, the scheduler system defines a certain range of time as the potential starting time of non-schedulable appliance based on its historical usage pattern. In this test, there are three assumptions made. First, each appliance is allowed to run only once and it will stop when its runtime finished. Second, the power consumption of any schedulable appliances within each time slot during the appliances active is a constant value. Third, all appliances have to be finished before the end. 

Fig.~\ref{fig:price1} illustrates the real-time electricity price over 12 hours scheduling horizon. Given the real-time electricity price, online scheduler decides on the actual appliance usage pattern. Table~\ref{tab:operation-table1} and Fig.~\ref{fig:sch_3app} demonstrate the proposed operation of schedulable module and the operation of battery and real time power consumption, respectively. It can be noticed that the power consumption followed the privacy constraint which successfully protect the behavior of all appliance. In the battery operation, the battery will gain energy at the positive value and release energy at negative value in the battery operation as shown in Fig.~\ref{fig:sch_3app}. The non-schedulable module evaluation results are shown in Table~\ref{tab:operation-table2} and Fig.~\ref{fig:nonsch_3app}, respectively. In the Table~\ref{tab:operation-table1} and Table~\ref{tab:operation-table2}, for each appliance, the digits '1' represent that the appliance is switched on and the digits '0' represent that the appliance is switched off. In this scheduling results, the non-schedulable appliance did not run. But the system prepared for it between the first six hours which define as the non-schedulable time zone in this instance. The battery will reserve enough energy for the unpredictable appliance, thus it will follow the privacy constrain. In this module, the system will prepare for the worst case of privacy risk and also find the lowest electricity billing price solution.   

\subsection{Sensitivity analysis of battery capacity and billing price}
\label{subsec:SensiAna}
In this section, we analyze the effect of electricity price and battery capacity on the scheduler the system. The experiment setup of system remains the same as discussed in the Section~\ref{subsec:schedulerEvaluation}. Fig.~\ref{fig:price2}, marked as the second price, is modified from Fig.~\ref{fig:price1}. The minimum electricity price was switched out from the non-schedulable time zone in Fig.~\ref{fig:price2}. To evaluate the effect of price, the test will be repeated with the second price.

The evaluation results is shown in Fig.~\ref{fig:total_billing_price}. The X coordinate represents the battery size and Y coordinate represents the amount of billing price in the U.S. Dollar. As increasing the battery capacity, the billing price converges to a constant value and it indicted the battery capacitor is sufficient in this certain scenario. As shown in Fig.~\ref{fig:total_billing_price}, the system is very sensitive when the battery capacity is relative small. Because the battery is not sufficient to reserve enough energy in a lower price time slot, the billing price drop rapidly when the capacity increase from 8 kW to 20 kW. For the non-schedulable module, it can be noticed that the battery must reserve electricity for the non-schedulable appliance, thus the the total price is slightly higher than the price for a schedulable module. In the second price instance, the minimum price is out of non-schedulable active time range and the system will schedule the battery to reserve energy to prepare for the non-schedulable appliance at higher price time slot. Hence, the total price will increase since the scheduler has to follow the privacy constrain. Therefore, the price drops more when the battery capacity increases in non-schedulable module.

\begin{figure}[tb!]
  \begin{centering}
	\includegraphics[width=0.42\textwidth]{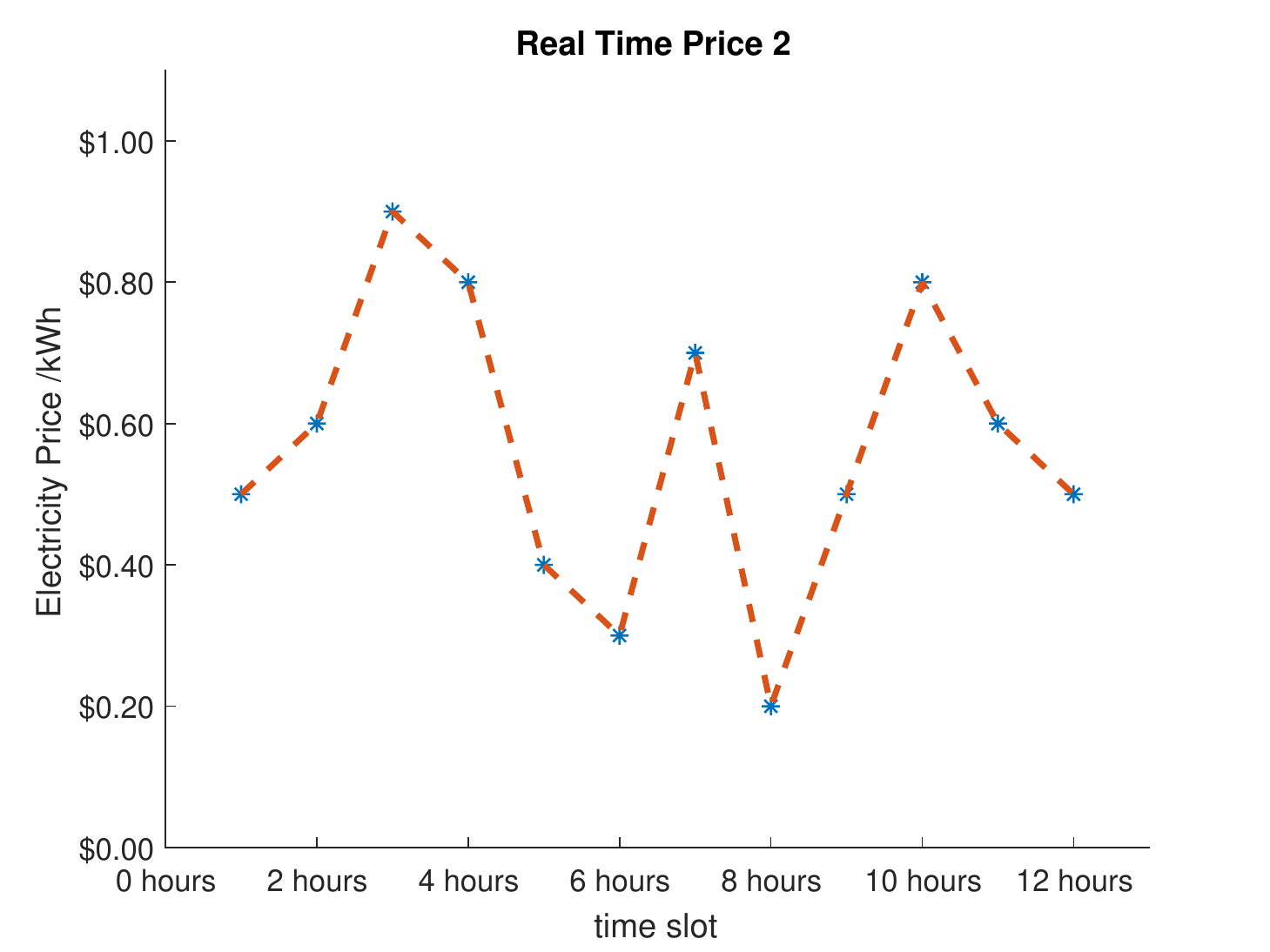}
	\par
  \end{centering}
    \vspace*{-3ex}
  \caption{Modified real time price for sensitivity analysis.}
  \label{fig:price2}     \vspace*{-3ex}
\end{figure}

\begin{figure}[tb!]
  \begin{centering}
	\includegraphics[width=0.42\textwidth]{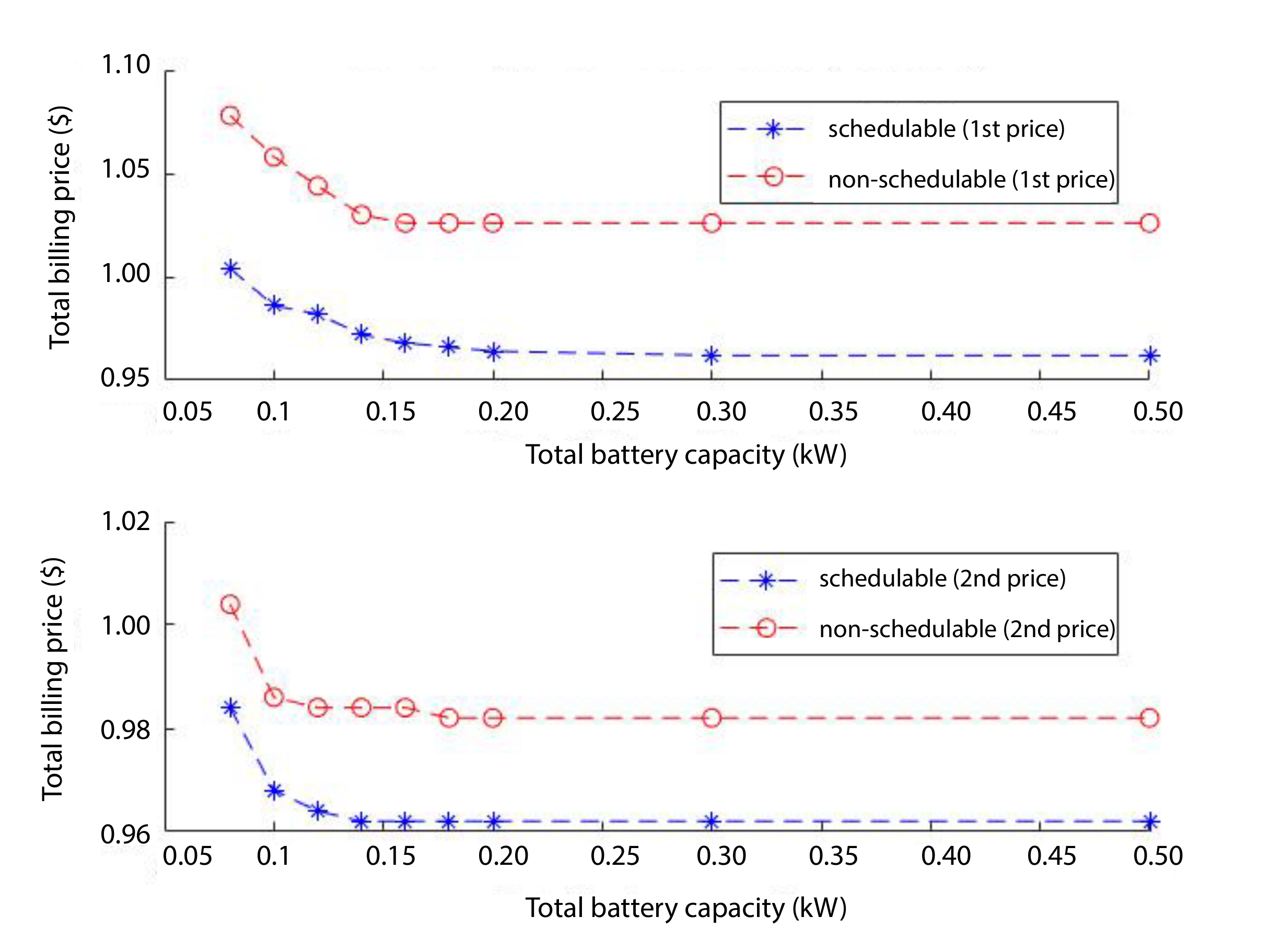}
	\par
  \end{centering}
    \vspace*{-3ex}
  \caption{Total billing price VS total battery capacity sizes.  }
  \label{fig:total_billing_price}     \vspace*{-4ex}
\end{figure}

The scheduler is able to protect the appliance behavior information and also able to obtain a better price solution. The scheduler is build with high flexibility and it can handle more complex scenarios, such as flexible battery charging/discharging rate, a smart home system with a large number of schedulable appliances and non-schedulable appliances.

\section{Summary and Future Work}\label{sec:conclusion}

Smart homes promise many potentials but also raise new privacy concerns. 
This paper considers the effects of fake guideline electricity price and 
non-schedulable appliances' operation uncertainties in 
appliance scheduling for smart homes. Different from existing research, this work aims to not only minimize electricity cost but also protect customers' privacy. The proposed framework, \textbf{PACES},  is evaluated using publicly
released households' data sets. Our experimental study shows that \textbf{PACES} can effectively protect customers' privacy and
satisfy their immediate service requirement with a small increase in electricity cost. \textbf{PACES} can be somewhat time consuming if a household has a large number of appliances. There are great research opportunities in the area of privacy protection and cost reduction for smart homes.




{\small
\bibliographystyle{ieee}
\bibliography{reference}

\begin{thebibliography}{10}\itemsep=-1pt

\bibitem{price}
Ameren day-ahead prices.
\newblock \url{https://www2.ameren.com/RetailEnergy/RealTimePrices}.

\bibitem{ECO}
Electricity consumption and occupancy.
\newblock
  \url{http://data-archive.ethz.ch/delivery/DeliveryManagerServlet?dps_pid=IE594964}.

\bibitem{al2014novel}
K.~Al-Jabery, D.~C. Wunsch, J.~Xiong, and Y.~Shi.
\newblock A novel grid load management technique using electric water heaters
  and q-learning.
\newblock In {\em 2014 IEEE International Conference on Smart Grid
  Communications (SmartGridComm)}, pages 776--781. IEEE, 2014.

\bibitem{al2017demand}
K.~Al-Jabery, Z.~Xu, W.~Yu, D.~C. Wunsch, J.~Xiong, and Y.~Shi.
\newblock Demand-side management of domestic electric water heaters using
  approximate dynamic programming.
\newblock {\em IEEE Transactions on Computer-Aided Design of Integrated
  Circuits and Systems}, 36(5):775--788, 2017.

\bibitem{bertsekas1995dynamic}
D.~P. Bertsekas, D.~P. Bertsekas, D.~P. Bertsekas, and D.~P. Bertsekas.
\newblock {\em Dynamic programming and optimal control}, volume~1.
\newblock Athena Scientific Belmont, MA, 1995.

\bibitem{chen2013residential}
Z.~Chen and L.~Wu.
\newblock Residential appliance dr energy management with electric privacy
  protection by online stochastic optimization.
\newblock {\em IEEE Transactions on Smart Grid}, 4(4):1861--1869, 2013.

\bibitem{kalogridis2010privacy}
G.~Kalogridis, C.~Efthymiou, S.~Z. Denic, T.~A. Lewis, and R.~Cepeda.
\newblock Privacy for smart meters: Towards undetectable appliance load
  signatures.
\newblock In {\em IEEE International Conference on Smart Grid Communications},
  pages 232--237, 2010.

\bibitem{liu2016optimal}
E.~Liu, P.~You, and P.~Cheng.
\newblock Optimal privacy-preserving load scheduling in smart grid.
\newblock In {\em 2016 IEEE Power and Energy Society General Meeting (PESGM)},
  pages 1--5, 2016.

\bibitem{liu2015impact}
Y.~Liu, S.~Hu, J.~Wu, Y.~Shi, Y.~Jin, Y.~Hu, and X.~Li.
\newblock Impact assessment of net metering on smart home cyberattack
  detection.
\newblock In {\em Proceedings of the 52nd Annual Design Automation Conference},
  page~97. ACM, 2015.

\bibitem{liu2016smart}
Y.~Liu, S.~Hu, J.~Wu, Y.~Shi, Y.~Jin, Y.~Hu, and X.~Li.
\newblock Smart home cybersecurity considering the integration of renewable
  energy.
\newblock In {\em Smart Cities and Homes}, pages 173--189. Elsevier, 2016.

\bibitem{mclaughlin2011protecting}
S.~McLaughlin, P.~McDaniel, and W.~Aiello.
\newblock Protecting consumer privacy from electric load monitoring.
\newblock In {\em Proceedings of the 18th ACM conference on Computer and
  communications security}, pages 87--98. ACM, 2011.

\bibitem{molina2010private}
A.~Molina-Markham, P.~Shenoy, K.~Fu, E.~Cecchet, and D.~Irwin.
\newblock Private memoirs of a smart meter.
\newblock In {\em ACM workshop on embedded sensing systems for
  energy-efficiency in building}, pages 61--66, 2010.

\bibitem{palensky2011demand}
P.~Palensky and D.~Dietrich.
\newblock Demand side management: Demand response, intelligent energy systems,
  and smart loads.
\newblock {\em IEEE transactions on industrial informatics}, 7(3):381--388,
  2011.

\bibitem{sankar2013smart}
L.~Sankar, S.~R. Rajagopalan, and S.~Mohajer.
\newblock Smart meter privacy: A theoretical framework.
\newblock {\em IEEE Transactions on Smart Grid}, 4(2):837--846, 2013.

\bibitem{tan2013increasing}
O.~Tan, D.~Gunduz, and H.~V. Poor.
\newblock Increasing smart meter privacy through energy harvesting and storage
  devices.
\newblock {\em IEEE Journal on Selected Areas in Communications},
  31(7):1331--1341, 2013.

\bibitem{wu2016privacy}
J.~Wu, J.~Liu, X.~S. Hu, and Y.~Shi.
\newblock Privacy protection via appliance scheduling in smart homes.
\newblock In {\em 2016 IEEE/ACM International Conference on Computer-Aided
  Design (ICCAD)}, pages 1--6, 2016.

\bibitem{wu2015efficient}
J.~Wu, J.~Xiong, and Y.~Shi.
\newblock Efficient location identification of multiple line outages with
  limited pmus in smart grids.
\newblock {\em IEEE Transactions on Power Systems}, 30(4):1659--1668, 2015.

\bibitem{xu2018scaling}
X.~Xu, Y.~Ding, S.~X. Hu, M.~Niemier, J.~Cong, Y.~Hu, and Y.~Shi.
\newblock Scaling for edge inference of deep neural networks.
\newblock {\em Nature Electronics}, 1(4):216, 2018.

\bibitem{xu2018accelerating}
X.~Xu, F.~Lin, A.~Wang, X.~Yao, Q.~Lu, W.~Xu, Y.~Shi, and Y.~Hu.
\newblock Accelerating dynamic time warping with memristor-based customized
  fabrics.
\newblock {\em IEEE Transactions on Computer-Aided Design of Integrated
  Circuits and Systems}, 37(4):729--741, 2018.

\bibitem{xu2018mda}
X.~Xu, F.~Lin, W.~Xu, X.~Yao, Y.~Shi, D.~Zeng, and Y.~Hu.
\newblock Mda: A reconfigurable memristor-based distance accelerator for time
  series mining on data centers.
\newblock {\em IEEE Transactions on Computer-Aided Design of Integrated
  Circuits and Systems}, 2018.

\bibitem{xu2018quantization}
X.~Xu, Q.~Lu, L.~Yang, S.~Hu, D.~Chen, Y.~Hu, and Y.~Shi.
\newblock Quantization of fully convolutional networks for accurate biomedical
  image segmentation.
\newblock In {\em Proceedings of the IEEE Conference on Computer Vision and
  Pattern Recognition}, pages 8300--8308, 2018.

\bibitem{xu2017efficient}
X.~Xu, D.~Zeng, W.~Xu, Y.~Shi, and Y.~Hu.
\newblock An efficient memristor-based distance accelerator for time series
  data mining on data centers.
\newblock In {\em 2017 54th ACM/EDAC/IEEE Design Automation Conference (DAC)},
  pages 1--6. IEEE, 2017.

\bibitem{yang2015cost}
L.~Yang, X.~Chen, J.~Zhang, and H.~V. Poor.
\newblock Cost-effective and privacy-preserving energy management for smart
  meters.
\newblock {\em IEEE Transactions on Smart Grid}, 6(1):486--495, 2015.

\bibitem{yuan2011modeling}
Y.~Yuan, Z.~Li, and K.~Ren.
\newblock Modeling load redistribution attacks in power systems.
\newblock {\em IEEE Transactions on Smart Grid}, 2(2):382--390, 2011.

\bibitem{yue2011dual}
S.~Yue, J.~Chen, Y.~Gu, C.~Wu, and Y.~Shi.
\newblock Dual-pricing policy for controller-side strategies in demand side
  management.
\newblock In {\em IEEE International Conference on Smart Grid Communications},
  pages 357--362, 2011.

\end{thebibliography}
}

\end{document}